\newcommand{\gl}{{\mathfrak g \mathfrak l}}
\newcommand{\g}{{\mathfrak g}}         
\newcommand{\cx}{{\mathbb C}}
\newcommand{\ad}{\operatorname{ad}}
\newcommand{\Ad}{\operatorname{Ad}}
\newcommand{\tr}{\operatorname{tr}}
\newcommand{\re}{\operatorname{Re}}
\newcommand{\im}{\operatorname{Im}}
\newcommand{\Ker}{\operatorname{Ker}}
\newcommand{\Jac}{\operatorname{Jac}}
\newcommand{\grad}{\operatorname{grad}}
\newcommand{\Mat}{\operatorname{Mat}}
\newcommand{\str}{\operatorname{str}}
\numberwithin{equation}{section}
\newtheorem{theorem}{Theorem}[section]
\newtheorem{corollary}[theorem]{Corollary}
\newtheorem{proposition}[theorem]{Proposition}
\theoremstyle{remark}
\newtheorem{remark}[theorem]{Remark}
\newtheorem{definition}[theorem]{Definition}
\newtheorem{example}[theorem]{Example}
\newcommand{\A}{{\mathbb{A}}}
\newcommand{\B}{{\mathbb{B}}}
\newcommand{\oP}{{\mathbb{P}}}
\newcommand{\sF}{{\mathcal{F}}}
\newcommand{\sG}{{\mathcal{G}}}   
\newcommand{\sL}{{\mathcal{L}}}   
\newcommand{\sM}{{\mathcal{M}}}   
\newcommand{\sO}{{\mathcal{O}}}
\newcommand{\fG}{{\mathfrak{g}}}
\newcommand{\fK}{{\mathfrak{k}}}
\newcommand{\fL}{{\mathfrak{l}}}
\newcommand{\fM}{{\mathfrak{m}}}
\newcommand{\fU}{{\mathfrak{u}}}
\begin{document}

\title[Nahm's equations, BHT-equations and Lie superalgebras]{Nahm's, Basu-Harvey-Terashima's equations and Lie superalgebras}
\author{Roger Bielawski}
\address{}


\begin{abstract} We discuss the correspondence between Nahm's equations, the Basu-Harvey-Terashima equations, and Lie superalgebras.
\end{abstract}

\maketitle

\thispagestyle{empty}

\section{Introduction}

This paper arose from the following observation. Consider $\Mat_{n,m}(\cx)\oplus \Mat_{m,n}(\cx)$ with its flat $U(n)\times U(m)$-invariant hyperk\"ahler structure. Let $\mu_1,\mu_2,\mu_3$ be the hyperk\"ahler moment map for the $U(n)$-action and $\nu_1,\nu_2,\nu_3$  the hyperk\"ahler moment map for the $U(m)$-action. Then, along the gradient flow of
\begin{equation}
 F=|\mu_1|^2-|\nu_1|^2, \label{F1}
\end{equation}
$\mu_1,\mu_2,\mu_3$ and $-\nu_1,-\nu_2,-\nu_3$ satisfy Nahm's equations.
\par
This fact has several explanations and consequences. At the simplest level, it follows from the fact  that
\begin{equation}
 I_1(X_{\mu_1}- X_{\nu_1})= I_2(X_{\mu_2}- X_{\nu_2})= I_3(X_{\mu_3}- X_{\nu_3}),
\end{equation}
where $X_\rho$ is the vector field generated by a $\rho$ in the Lie algebra of the symmetry group.
\par
The function \eqref{F1} is a quartic polynomial on $W=\Mat_{n,m}(\cx)\oplus \Mat_{m,n}(\cx)$ and the gradient flow equations are
\begin{equation}
 \begin{matrix}\dot A=\frac{1}{2}(AB B^\ast-B^\ast B A)\\ \dot B=\frac{1}{2}(A^\ast A B-BAA^\ast).
 \end{matrix}
\label{Ter}
\end{equation}
These equations are known as the ABJM version of the Basu-Harvey equations and are due to Terashima \cite{Ter}, and, consequently, we shall refer to them as the {\em BHT-equations}. We observe that they have a very natural interpretation as {\em double superbracket equations} on the odd part of the Lie superalgebra $\gl_{n|m}(\cx)$: 
\begin{equation}
 \dot{C}=\frac{1}{2}[[J(C),C],C]],\label{triple}
\end{equation}
where $C=\begin{pmatrix}0 & A\\ B & 0\end{pmatrix}$ and $J$ is the quaternionic automorphism $J(A,B)=(-B^\ast,A^\ast)$. This equation makes sense for any complex anti-Lie triple system \cite{FoF} equipped with a quaternionic automorphism, and we observe that any solution of \eqref{triple} in this general setting leads to a solution to Nahm's equations (with values in an appropriate Lie algebra).
\par
We give two more interpretations of equations \eqref{Ter}. Firstly, there is a geometric interpretation as a gradient flow on a $GL_n(\cx)\times GL_m(\cx)$-orbit in $\Mat_{n,m}(\cx)\oplus \Mat_{m,n}(\cx)$ for a {\em quadratic} function with respect to certain indefinite metric (\S\ref{Laxs}).
\par
Secondly, similarly to Nahm's equations \cite{Hit1,Hit2}, there is an interpretation as a linear flow on the Jacobian of an algebraic curve. This time, however, the spectral curve is a subscheme of $\oP^2$ and the flow is restricted to line bundles equivariant with respect to certain involution $\tau$ of the spectral curve. 

\section{Moment maps, Nahm's and  the Basu-Harvey-Terashima equations\label{1}} 

We consider the vector space $W_{n,m}=\Mat_{n,m}(\cx)\oplus \Mat_{m,n}(\cx)$ with its natural flat hyperk\"ahler structure: the quaternionic structure $J$ is given by $J(A,B)=(-B^\ast, A^\ast)$ and the metric is 
\begin{equation} \frac{1}{2} \re\tr \left(dA\otimes dA^\ast+dB \otimes dB^\ast\right).\label{metric}\end{equation}
This hyperk\"ahler structure is invariant under the natural $U(n)\times U(m)$-action, given by
\begin{equation}
 (g,h).(A,B)=(gAh^{-1},hBg^{-1}).\label{action}
\end{equation}
The hyperk\"ahler moment map for the $U(n)$-action is
$$ i\mu_1(A,B)=\frac{1}{2}(AA^\ast-B^\ast B),\quad (\mu_2+i\mu_3)(A,B)=AB,$$
while the moment map for the $U(m)$-action is
$$ i\nu_1(A,B)=-\frac{1}{2}(A^\ast A-BB^\ast),\quad (\nu_2+i\nu_3)(A,B)=-BA.$$
Here we identified Lie algebras with their duals using the $\Ad$-invariant metrics $\|X\|^2=-\tr X^2$. A simple calculation shows that for $i=1,2,3$ and 
any $(A,B)\in W_{n,m}$
$$ \|\mu_i(A,B)\|^2-\|\nu_i(A,B)\|^2=\frac{1}{2}\tr(A^\ast ABB^\ast-B^\ast BAA^\ast).$$
The fact that $\|\mu_i\|^2-\|\nu_i\|^2$ is independent of $i$ has the following consequence.
\begin{proposition}
 Let $m(t)\in W_{n,m}$ be a  gradient flow curve of the function $F=\frac{1}{2}\|\mu_1(A,B)\|^2-\frac{1}{2}\|\nu_1(A,B)\|^2$. Then the $\fU(n)$-valued functions $T_i(t)=\mu_i(m(t))$ satisfy Nahm's equations
\begin{equation} \dot{T}_1=[T_2,T_3],\enskip \dot{T}_2=[T_3,T_1],\enskip \dot{T}_3=[T_1,T_2].\label{Nahm}
\end{equation}
Similarly, the $\fU(m)$-valued functions $S_i(t)=-\nu_i(m(t))$ satisfy the Nahm equations.\label{NBHT}
\end{proposition}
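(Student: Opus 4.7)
The plan is to exploit three equivalent presentations of $\operatorname{grad} F$ coming from the hyperk\"ahler moment-map identities. In each K\"ahler structure $(g,I_j,\omega_j)$, the gradient of $\tfrac12\|\mu_j\|^2$ equals $I_jX_{\mu_j(m)}$, where $X_\xi$ denotes the fundamental vector field of $\xi\in\u(n)$; write $Y_\eta$ for the corresponding $U(m)$-action of $\eta\in\u(m)$. Combined with the identity shown just before the proposition --- namely that $\|\mu_j\|^2-\|\nu_j\|^2$ is independent of $j$ --- this yields
\[ \dot m=\operatorname{grad} F=I_j\bigl(X_{\mu_j(m)}-Y_{\nu_j(m)}\bigr),\qquad j=1,2,3, \]
which is the identity already recorded in the introduction.

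To derive $\dot T_1=[T_2,T_3]$ I would use the $j=2$ presentation. Pairing $d\mu_1(\dot m)$ against an arbitrary $\eta\in\u(n)$, the moment-map identity $\omega_1(X_\eta,\cdot)=d\langle\mu_1,\eta\rangle$ together with $\omega_i(X,Y)=g(I_iX,Y)$ and $I_1I_2=I_3$ convert this to
\[ \langle d\mu_1(\dot m),\eta\rangle=\omega_3(X_\eta,X_{\mu_2(m)})-\omega_3(X_\eta,Y_{\nu_2(m)}). \]
The second term vanishes because the $U(n)$ and $U(m)$ actions commute, so $\mu_3$ is $U(m)$-invariant and $d\mu_3(Y_\bullet)\equiv 0$; the first term equals $d\langle\mu_3,\eta\rangle(X_{\mu_2(m)})$, which by $U(n)$-equivariance of $\mu_3$ is $\langle[\mu_2,\mu_3],\eta\rangle$. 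Since $\eta$ is arbitrary, $\dot T_1=[T_2,T_3]$, and cyclic permutation of $j$ yields the remaining two Nahm equations for $T_i$.

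The statement for $S_i=-\nu_i$ is obtained by running the same calculation with the roles of $\mu$ and $\nu$ swapped: choosing $j=2$, the $I_2X_{\mu_2(m)}$ contribution to $\dot\nu_1$ vanishes by $U(n)$-invariance of $\nu_3$, while the $-I_2Y_{\nu_2(m)}$ contribution yields $-[\nu_2,\nu_3]$, so $\dot\nu_1=-[\nu_2,\nu_3]$. The minus sign in the definition of $F$ is exactly what converts this into $\dot S_1=[S_2,S_3]$ for $S_i:=-\nu_i$. The only delicate point in the argument is this sign bookkeeping --- in particular, verifying that the normalizations of the two moment maps line up so that the same $I_j$ appears in both summands of $\operatorname{grad} F$ --- and everything else is a one-line consequence of the threefold presentation of $\operatorname{grad} F$.
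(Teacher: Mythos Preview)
Your proof is correct and follows essentially the same route as the paper's. Both arguments derive the threefold presentation $\nabla F=I_j(X_{\mu_j}-X_{\nu_j})$ from the $j$-independence of $\|\mu_j\|^2-\|\nu_j\|^2$, drop the cross term via commutativity of the two actions, and convert $d\mu_1\circ I_2$ into $d\mu_3$ to obtain the bracket; your version simply makes the pairing with a test element $\eta$ explicit where the paper writes the chain of equalities $d\mu_1(I_2X_{\mu_2})=d\mu_3(X_{\mu_2})=[\mu_2,\mu_3]$ directly.
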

\begin{proof} The gradient vector field of $F$ is $I_1X_{\mu_1}-I_1X_{\nu_1}$. Since $F$ is also equal to $\frac{1}{2}\|\mu_i(A,B)\|^2-\frac{1}{2}\|\nu_i(A,B)\|^2$ for $i=2,3$, we obtain
 $$  I_1X_{\mu_1}- I_1X_{\nu_1}= I_2X_{\mu_2}- I_2X_{\nu_2}= I_3X_{\mu_3}- I_3X_{\nu_3}.$$
We compute, using the fact that the moment map for a group action is invariant with respect to any commuting Lie group action,
$$ \dot{T}_1=d\mu_1(\nabla F)=d\mu_1(I_2X_{\mu_2}- I_2X_{\nu_2})=d\mu_1(I_2X_{\mu_2})=d\mu_3(X_{\mu_2})=[\mu_2,\mu_3]=[T_2,T_3],$$
and similarly for $T_2,T_3$. The argument for the $S_i$ is completely analogous.
\end{proof}
The gradient flow equations for the function $ F=\tr \left(A^\ast ABB^\ast-B^\ast BAA^\ast\right)$ are
\begin{equation}
 \begin{matrix}\dot A=\frac{1}{2}(AB B^\ast-B^\ast B A)\\ \dot B=\frac{1}{2}(A^\ast A B-BAA^\ast).
 \end{matrix}
\label{3rd}
\end{equation}
One can also check  directly that, for a solution $A,B$ of these equations, the functions  $T_1=\frac{1}{2i}(AA^\ast-B^\ast B)$, $T_2+iT_3=AB$ satisfy Nahm's equations, and similarly the functions 
$S_1=\frac{1}{2i}(A^\ast A-B B^\ast)$, $S_2+iS_3=BA$.
\par
For the reason mentioned in the introduction, we shall refer to equations \eqref{3rd} as the {\em Basu-Harvey-Terashima (BHT) equations}.
\par
\begin{remark}
Similarly to Nahm's equations, there exists a gauge-dependent version of the BHT-equations. Introduce two more matrix valued functions $u(t)\in \fU(n)$ and $v(t)\in \fU(m)$ and consider the following equations:
\begin{equation}
 \begin{matrix}\dot A +uA-Av=\frac{1}{2}(AB B^\ast-B^\ast B A)\\ \dot B+vB-Bu=\frac{1}{2}(A^\ast A B-BAA^\ast).
 \end{matrix}
\label{gauge}
\end{equation}
These equations are invariant under the following $U(n)\times U(m)$-valued gauge group action:
$$ A\mapsto gAh^{-1},\enskip B\mapsto hBg^{-1},\enskip u\mapsto gug^{-1}-\dot{g}g^{-1},\enskip v\mapsto hvh^{-1}-\dot{h}h^{-1}.
$$
\end{remark}


\section{Lax pair interpretation\label{Laxs}}

\begin{proposition}
  Let $I$ be an interval and $n\geq m$ be two positive integers. Let  $X:I\to \Mat_{n,m}$, $Y:I\to  \Mat_{m,n}$ be of class $C^k$, $k\geq 1$, and of rank $m$ for all $t\in I$. Suppose that $Z=XY$ satisfies the Lax equation $\dot Z=[M,Z]$ for some $M:I\to \Mat_{n,n}$ of class $C^{k-1}$. Then there exists a unique  $N:I\to \Mat_{m,m}$ of class $C^{k-1}$, such that the following equations are satisfied:
\begin{equation}
\begin{matrix}\dot X=MX+XN\\ \dot Y=-YM-NY.\end{matrix}\label{first}
\end{equation}
Conversely, if \eqref{first} are satisfied, then $\dot Z=[M,Z]$, and, moreover, $W=YX$ satisfies the Lax equation $\dot W=[W,N]$.
\end{proposition}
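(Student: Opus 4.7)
The plan is to build $N$ from a left inverse of $X$ (available because $\rank X = m$ and $n\geq m$) and then obtain the equation for $Y$ by using the Lax equation to eliminate the apparent freedom. Concretely, for each $t$ the matrix $X^\ast X$ is invertible, so $X^+:=(X^\ast X)^{-1}X^\ast$ is a $C^k$ left inverse of $X$; likewise $YY^\ast$ is invertible and $Y^+:=Y^\ast(YY^\ast)^{-1}$ is a $C^k$ right inverse of $Y$.

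First I would define $N:=X^+(\dot X - MX)$; this is of class $C^{k-1}$ since $X$ and $X^\ast$ are $C^k$ while $\dot X$ and $M$ are $C^{k-1}$. The equation $\dot X = MX + XN$ is equivalent to $XN = \dot X - MX$, which in turn amounts to the statement that $\dot X - MX$ lies in the column space of $X$. To establish this, expand $\dot Z = \dot XY + X\dot Y$ and $[M,Z]=MXY-XYM$ and rearrange to get
\[
(\dot X - MX)\,Y \;=\; -X(\dot Y + YM).
\]
Right-multiplying by $Y^+$ then gives $\dot X - MX = -X(\dot Y + YM)Y^+$, which is visibly in $\operatorname{Im}(X)$, so $XX^+(\dot X - MX) = \dot X - MX$ as required.

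Next, substituting $\dot X = MX + XN$ back into $\dot Z = [M,Z]$ produces $X(\dot Y + NY + YM) = 0$, and since $X$ is injective (full column rank) this forces, column by column, the second equation $\dot Y = -YM - NY$. Uniqueness of $N$ follows from the same injectivity: any two candidates satisfy $X(N_1 - N_2)=0$, whence $N_1=N_2$.

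The converse is a pair of direct computations. Assuming \eqref{first}, one has
\[
\dot Z \;=\; (MX + XN)Y + X(-YM - NY) \;=\; MZ - ZM,
\]
and for $W = YX$,
\[
\dot W \;=\; (-YM - NY)X + Y(MX + XN) \;=\; YXN - NYX \;=\; [W,N].
\]
The only step with any real content is showing $\dot X - MX \in \operatorname{Im}(X)$, which crucially uses both the Lax equation for $Z$ and the full row rank of $Y$; everything else is bookkeeping.
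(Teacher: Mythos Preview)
Your proof is correct and follows essentially the same approach as the paper: both rewrite the Lax equation as $(\dot X-MX)Y=-X(\dot Y+YM)$ and exploit the full rank of $X$ and $Y$ to factor out $N$, with the converse handled by direct computation. The only cosmetic difference is that you construct $N$ explicitly via the pseudoinverse $X^+$ and then recover the $Y$-equation by substituting back into the Lax equation and using injectivity of $X$, whereas the paper introduces two candidates $N_1,N_2$ (one from each of the $X$- and $Y$-equations) and then shows $X(N_1-N_2)Y=0$ forces $N_1=N_2$.
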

\begin{proof} A direct computation shows that if \eqref{first} are satisfied, then both $Z$ and $W$ satisfy the relevant Lax equations. Conversely, suppose that $\dot Z=[Z,M]$, i.e. $\dot X Y+ X \dot Y-MXY+XYM=0$. We rewrite this as
$$ (\dot X-MX)Y+X(\dot Y+YM)=0.$$
Let $U=\dot X-MX$ and $V=-\dot Y-YM$. Then $UY=XV$. Since $X$ and $Y$ have rank $m$, there are unique $N_1,N_2$ such that $U=XN_1,V=N_2Y$. It follows that $X(N_1-N_2)Y=0$, and using the maximality of the rank of $X,Y$, we conclude that $N_1=N_2$. The equations \eqref{first} are satisfied with $N=N_1=N_2$. The differentiability class of $N$ follows from its uniqueness.
\end{proof}

Equations \eqref{first} can also be written in the Lax form. Set
\begin{equation*}C=\begin{pmatrix} 0 & X\\ Y & 0\end{pmatrix},\quad C_+= \begin{pmatrix} M & 0\\ 0 & -N\end{pmatrix}.\label{CC}\end{equation*}
Then \eqref{first} is equivalent to
\begin{equation} \dot{C}=[C_+,C].\label{dotC}\end{equation}

We now introduce a spectral parameter, and consider $X(\zeta)=A_0+A_1\zeta$, $Y(\zeta)=B_0+B_1\zeta$. We suppose that $Z(\zeta)=X(\zeta)Y(\zeta)$ satisfies the Lax equation
$$ \dot Z=[Z_\#,Z],$$
where $Z_\#=\frac{1}{2}(A_0B_1+A_1B_0)+A_1B_1\zeta$. According to the previous proposition, we can find an $N(\zeta)$, so that \eqref{first} holds. We have then $\dot W=[-N,W]$, where $W(\zeta)=Y(\zeta)X(\zeta)$. We try $N(\zeta)$ of the form $N=-W_\#$, i.e. $-N=\frac{1}{2}(B_0A_1+B_1A_0)+B_1A_1\zeta$.
Substituting into the equations \eqref{first} we obtain:
\begin{equation}
 \begin{matrix}\dot A_0=\frac{1}{2}(A_1B_0A_0-A_0B_0A_1)\\ \dot A_1=\frac{1}{2}(A_1B_1A_0-A_0B_1A_1)\\ \dot B_0=\frac{1}{2}(B_1A_0B_0-B_0A_0B_1)\\
  \dot B_1=\frac{1}{2}(B_1A_1B_0-B_0A_1B_1).
 \end{matrix}
\label{2nd}
\end{equation}
Thus, these equations are equivalent to
\begin{equation}
  \dot Z=[Z_\#,Z],\quad  \dot W=[W_\#,W], \label{Lax}
\end{equation}
where $Z,Z_\#,W,W_\#$ are defined above. Now suppose that the $A_i$ and $B_i$ satisfy the reality condition: $A_1=-B_0^\ast$, $B_1=A_0^\ast$. We write simply $A,B$ for $A_0,B_0$. It follows that the equations \eqref{Lax} are equivalent to Nahm's equations for $(T_1,T_2,T_3)$ and $(S_1,S_2,S_3)$, where
\begin{equation}T_2+iT_3=AB,\quad iT_1=\frac{1}{2}(AA^\ast-B^\ast B),\label{N1}\end{equation}
\begin{equation} S_2+iS_3=BA,\quad iS_1=\frac{1}{2}(A^\ast A-B B^\ast).\label{N2}\end{equation}
 On the other hand, \eqref{2nd} becomes the BHT-equations \eqref{3rd}.
Therefore we have a different proof of Proposition \ref{NBHT}, the statement of which can be strengthened as follows:
\begin{corollary}
 Let $n\geq m$. The equations \eqref{3rd} are equivalent to Nahm's equations for $(T_1,T_2,T_3)$ defined by \eqref{N1}. In addition, they imply Nahm's equations for
$(S_1,S_2,S_3)$ defined by \eqref{N2}.\hfill $\Box$\label{AB}
\end{corollary}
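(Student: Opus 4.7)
The plan is to deduce the corollary from the Lax formulation that has just been developed. Under the reality conditions $A_1=-B_0^\ast$, $B_1=A_0^\ast$, writing $A=A_0$, $B=B_0$, a direct substitution gives
\begin{equation*}
Z(\zeta)=(T_2+iT_3)+2iT_1\zeta+(T_2-iT_3)\zeta^2,\qquad Z_\#(\zeta)=iT_1+(T_2-iT_3)\zeta,
\end{equation*}
and the analogous expressions for $W,W_\#$ with the $T_i$ replaced by the $S_i$. This is Hitchin's standard Lax encoding of Nahm's equations: coefficient-matching in $\zeta^0,\zeta^1,\zeta^2$ in the identity $\dot Z=[Z_\#,Z]$ yields precisely \eqref{Nahm} for $T_1,T_2,T_3$, and likewise $\dot W=[W_\#,W]$ is equivalent to Nahm's equations for $S_1,S_2,S_3$.

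The implication ``\eqref{3rd} $\Rightarrow$ Nahm for both $(T_i)$ and $(S_i)$'' then follows at once, with no rank hypothesis: a solution of \eqref{3rd} is a solution of \eqref{2nd} under the reality ansatz, and \eqref{2nd} is equivalent to the Lax pair \eqref{Lax} by the derivation just before the corollary. For the converse direction (Nahm for $(T_i)$ $\Rightarrow$ \eqref{3rd}), I would assume that $T_1,T_2,T_3$ as in \eqref{N1} satisfy Nahm's equations, so that $\dot Z(\zeta)=[Z_\#(\zeta),Z(\zeta)]$ holds identically in $\zeta$, and invoke the Proposition at the start of this section with $X(\zeta)=A-B^\ast\zeta$, $Y(\zeta)=B+A^\ast\zeta$, $M=Z_\#$. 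This is where the assumption $n\geq m$ enters: for generic $A,B$ the linear pencils $X(\zeta),Y(\zeta)$ attain the maximal rank $m$ on an open dense subset of $\zeta\in\cx$. The Proposition then produces a unique $N(\zeta)$ making \eqref{first} hold; a degree count in $\zeta$ forces $N$ to be at most linear, and comparison with $W(\zeta)=Y(\zeta)X(\zeta)$ identifies $N=-W_\#$. Matching powers of $\zeta$ in \eqref{first} reproduces \eqref{3rd}.

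The main obstacle I anticipate is handling the rank hypothesis when $(A,B)$ itself fails to be of full rank: in that case the Proposition does not apply pointwise, and one must regard the equivalence as a polynomial identity in the entries of $A,B$ and their time derivatives, so that once it is established on the Zariski open subset where the rank is maximal, it extends by continuity to all of $W_{n,m}$. A more hands-on alternative is to bypass the spectral-parameter setup altogether and verify the converse by direct matrix computation: differentiating \eqref{N1}, substituting the Nahm equations, and using the rank-$m$ hypothesis to solve uniquely the resulting matrix equations for $(\dot A,\dot B)$ produces \eqref{3rd} directly. The second assertion of the corollary is then automatic, since any solution of \eqref{3rd} satisfies Nahm's equations for the $S_i$ by the first paragraph.
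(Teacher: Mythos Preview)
Your approach is exactly the paper's: the corollary carries only a $\Box$ because it is meant to follow immediately from the preceding discussion (Proposition~3.1, the spectral-parameter version \eqref{2nd}$\Leftrightarrow$\eqref{Lax}, and the observation that under the reality ansatz \eqref{2nd} becomes \eqref{3rd} while \eqref{Lax} becomes Nahm's equations). Your expansion of that discussion is accurate, including the computation of $Z(\zeta)$ and $Z_\#(\zeta)$ and the reduction to Hitchin's standard Lax form.

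One genuine issue: your proposed density/continuity argument for removing the rank hypothesis does \emph{not} work. The backward implication ``Nahm for $(T_i)\Rightarrow$ \eqref{3rd}'' really fails off the full-rank locus. For instance, with $n=m=1$, $A(t)=e^{it}$, $B(t)=0$, one has $T_1$ constant and $T_2=T_3=0$, so Nahm's equations hold, yet $\dot A=ie^{it}\neq 0=\tfrac12(ABB^\ast-B^\ast BA)$. The point is that the map $(\dot A,\dot B)\mapsto(\dot T_1,\dot T_2,\dot T_3)$ is not injective when $A$ or $B$ drops rank, so Nahm's equations cease to determine $(\dot A,\dot B)$ uniquely; this cannot be repaired by Zariski closure. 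The corollary should therefore be read with the rank-$m$ hypothesis of Proposition~3.1 tacitly in force (the paper does not make this explicit). Under that hypothesis your argument goes through; the identification $N=-W_\#$ can be made rigorous by the degree count you sketch together with the uniqueness clause of Proposition~3.1, applied first for $\zeta$ away from the finitely many values where $X(\zeta)$ or $Y(\zeta)$ is singular and then extended since both sides are polynomials in~$\zeta$.
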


We now rewrite \eqref{3rd} in the form \eqref{dotC}. Thus
$$ C=\begin{pmatrix} 0 & A-B^\ast \zeta\\ B+A^\ast\zeta & 0\end{pmatrix}, $$
$$ C_+= \begin{pmatrix} \frac{1}{2}(AA^\ast-B^\ast B)-B^\ast A^\ast\zeta & 0 \\ 0& \frac{1}{2}(A^\ast A-BB^\ast)- A^\ast B^\ast\zeta \end{pmatrix}.$$
Let us write $\sM_0$ for the block-diagonal part of $\Mat_{m+n,m+n}$ and $\sM_1$ for the off-diagonal part. Thus $C\in \sM_1$ and $C_+\in \sM_0$. Let $J:\sM_1\to \sM_1$ be the canonical quaternionic structure on $\sM_1=\Mat_{n,m}\oplus \Mat_{m,n}$, i.e. 
$$ J\begin{pmatrix} 0 & A\\B & 0\end{pmatrix}=\begin{pmatrix} 0 & -B^\ast \\A^\ast & 0\end{pmatrix}.$$
We have $J^2=-1$ and equations \eqref{3rd} can be written as
\begin{equation} \dot{C}=\frac{1}{2}[CJ(C)+J(C)C,C]=\frac{1}{2}[J(C),C^2].\label{JC}\end{equation}

This equation implies that the flow of $(A,B)$ remains in an orbit of $GL_n(\cx)\times GL_m(\cx)$ (in fact $S(GL_n(\cx)\times GL_m(\cx))$ and $SL_n(\cx)\times SL_m(\cx)$ for $n=m$). We shall now interpret the flow as a gradient flow on such an orbit, with respect to certain metric.
First of all, let us view $\Mat_{m+n,m+n}$ as the Lie superalgebra  $\fG\fL_{n|m}(\cx)$ with $\sM_0$ being the even part and $\sM_1$ the odd part. The Lie superbracket is defined as  $[A,B]=AB-(-1)^{|A||B|}BA$. Equation \eqref{JC} can be then written as
\begin{equation} \dot{C}=\frac{1}{2}[[J(C),C],C].\label{sJC}
\end{equation}
\begin{remark}
The map $J$ is the restriction of the following antilinear map on $\fG\fL_{n|m}(\cx)$:
\begin{equation} \begin{pmatrix} U & A\\B & V\end{pmatrix}\mapsto \begin{pmatrix} -U^\ast & -B^\ast \\A^\ast & -V^\ast\end{pmatrix},\label{JJ}\end{equation}
which we also denote by $J$. It is the negative of complex conjugation followed by the supertranspose, and, hence, it commutes with the superbracket. One could therefore consider equation \eqref{sJC} on all of $\fG\fL_{n|m}(\cx)$, rather than just on the odd part.
\end{remark}

\medskip

Recall now the notion of the supertrace:
$$\str \begin{pmatrix} U & A\\B & V\end{pmatrix}=\tr U-\tr V.$$
It has the following $\ad$-invariance property:
\begin{equation}\str [X,Y]Z+(-1)^{|X||Y|}\str Y[X,Z]=0.\label{ad-inv}\end{equation}

We define the following symmetric form on $\fG\fL_{n|m}(\cx)$::
\begin{equation} \langle X, Y\rangle=-\frac{1}{2}\str (J(X)Y+J(Y)X).\label{form}
\end{equation}
If we write $X$ and $Y$ in the block form as $(X_{ij})$ and $(Y_{ij})$, $i,j=0,1$, then
$$ \langle X, Y\rangle=\frac{1}{2}\sum_{i,j=0}^1(-1)^{ij}\tr\bigl( X_{ij}^\ast Y_{ij}+Y_{ij}^\ast X_{ij}\bigr).$$


In what follows $G$ denotes $S(GL_n(\cx)\times GL_m(\cx))$ for $n\neq m$  and $G=SL_n(\cx)\times SL_m(\cx)$ for $n=m$, and $\fG$ denotes its Lie algebra. 
In order to define an appropriate metric on an orbit of $G$ in $\sM_1=\Mat_{n,m}(\cx)\oplus \Mat_{m,n}(\cx)$ we adopt the following definition.
\begin{definition} An element $C$ of $\sM_1$ is called {\em $ \langle \: ,\rangle$-regular} if $\Ker \ad C\subset \fG$ is nondegenerate with respect to the form \eqref{form}.
\end{definition}

If $C$ is $ \langle \: ,\rangle$-regular, then $\Ker \ad C$ has  an $ \langle \: ,\rangle$-orthogonal complement $V_C$, which is also  $ \langle \: ,\rangle$-nondegenerate. In this case, we can decompose uniquely any $X\in \fG$ as $X=X^C+X^0$ with $X^C\in V_C$ and $X^0\in \Ker \ad C$.
\par
Let now $\sO$ be an orbit of $G$ in $\sM_1$ and $C\in \sO$ its $J$-regular element.   For  two  vectors $[C,X]$ and $[C,Y]$ tangent to $\sO$ at $C$ we define their inner product to be $\langle X^C, Y^C\rangle$. We obtain a pseudo-Riemannian  metric on the $J$-regular part of $\sO$, which we denote by $\langle\:,\rangle_\sO$.

\begin{theorem} On the $ \langle \: ,\rangle$-regular part of $\sO$ the flow \eqref{sJC} is the gradient flow of the  function $H(C)=\frac{1}{4}\langle C,C\rangle$  with respect to the metric $ \langle\:,\rangle_\sO$.
\end{theorem}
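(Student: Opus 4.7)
The plan is to verify, for every tangent vector $v = [C,X]$ at a $\langle\,,\rangle$-regular $C \in \sO$, the defining gradient identity $dH_C(v) = \langle\,,\rangle_\sO(\dot C, v)$, where $\dot C$ denotes the right-hand side of \eqref{sJC}. Using the ambiguity $X \mapsto X + X^0$, $X^0 \in \Ker\ad C$, I may assume $X \in V_C$. Setting $\Phi := [J(C), C]$, the flow reads $\dot C = \tfrac{1}{2}[\Phi, C] = [C, -\tfrac{1}{2}\Phi]$ (since $\Phi$ is even and $C$ is odd), and hence $\dot C = [C, (-\tfrac{1}{2}\Phi)^C]$; by the definition of the metric, $\langle\,,\rangle_\sO(\dot C, v) = \langle (-\tfrac{1}{2}\Phi)^C, X\rangle$.

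By the symmetry of $\langle\,,\rangle$, $dH_C(v) = \tfrac{1}{2}\langle v, C\rangle$. I would expand via the definition of $\langle\,,\rangle$ and the fact that $J$ commutes with the super-bracket, reducing the right-hand side to a combination of $\str([J(C), J(X)]C)$ and $\str(J(C)[C,X])$. Repeated application of the super-ad-invariance \eqref{ad-inv} and super-cyclicity of $\str$ then pushes the inner brackets out and collapses the two terms to
\[ dH_C([C,X]) = \tfrac{1}{4}\str(J(X)\Phi) - \tfrac{1}{4}\str(X\Phi). \]

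The structural identity $J(\Phi) = -\Phi$ follows from $J^2 = -1$, $J$ respecting the bracket, and the symmetry $[J(C),C] = [C, J(C)]$ of super-brackets of two odd elements: $J([J(C),C]) = [-C, J(C)] = -[J(C), C]$. Combined with ordinary cyclicity on even elements, it allows one to rewrite the right-hand side of the preceding display as $\langle -\tfrac{1}{2}\Phi, X\rangle$. For $X \in V_C$, the $\langle\,,\rangle$-orthogonality of $V_C$ and $\Ker\ad C$ yields $\langle -\tfrac{1}{2}\Phi, X\rangle = \langle(-\tfrac{1}{2}\Phi)^C, X\rangle$, which matches the expression for $\langle\,,\rangle_\sO(\dot C, v)$ identified in the first paragraph. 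The main obstacle is the sign-bookkeeping in the supertrace reduction of the second paragraph; the rest is essentially linear algebra once the key identity $J(\Phi) = -\Phi$ is noticed.
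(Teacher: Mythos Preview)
Your proposal is correct and follows essentially the same route as the paper: compute $dH_C([C,X])$ via the explicit form \eqref{form}, push the bracket through using the super-$\ad$-invariance \eqref{ad-inv} to obtain a pairing of $X$ against $\Phi=[J(C),C]$, and then invoke the $\langle\,,\rangle$-orthogonality of $V_C$ and $\Ker\ad C$ to match the metric side. The only cosmetic differences are that you isolate the identity $J(\Phi)=-\Phi$ explicitly (the paper uses it tacitly when rewriting $\tfrac12\re\str\,[C,J(C)]\rho$ as $-\tfrac12\langle\Phi,\rho\rangle$), and you pass to the $V_C$-component of $\Phi$ via decomposition rather than arguing, as the paper does, that $\Phi\in\im\ad J(C)\subset (\Ker\ad C)^{\perp}=V_C$ directly; your sign computation in the second paragraph does check out.
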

\begin{proof}
 The function $H$ can be written as $-\frac{1}{4}\str J(C)C$. By the definition of the gradient we have, for any tangent vector $[C,\rho]_s$,
$$ \langle \grad H,[C,\rho]_s\rangle_\sO=-\frac{1}{4}\str\bigl( J([C,\rho]_s)C+J(C)[C,\rho]_s\bigr)=-\frac{1}{2}\re\str  J(C)[C,\rho]_s.$$
Setting $\grad H=[C,X]_s$, we can rewrite this as
$$\langle X^C,\rho^C\rangle =\frac{1}{2}\re\str  J(C)[C,\rho]_s.$$
Recalling \eqref{ad-inv}  and using the fact that $|C|=|J(C)|=1$
we obtain
$$\langle X^C,\rho^C\rangle=\frac{1}{2}\re\str [C,J(C)]_s\rho= - \frac{1}{2} \langle [J(C),C]_s,\rho\rangle.$$
Since $\Ker \ad C$ is $\langle\:,\rangle$-orthogonal to $\im \ad_s J(C)$, we have $\langle  [J(C),C]_s,\rho\rangle=\langle  [J(C),C]_s,\rho^C\rangle$ and  $[J(C),C]_s\in V_C$. Since the metric $\langle\:,\rangle$ is nondegenerate
on $V_C$,  we can conclude that
$X^C=-\frac{1}{2}[J(C),C]_s$. 
Thus
$\grad H=[C,X]_s=[C,X^C]_s=\frac{1}{2}[[J(C),C]_s,C]_s$.
\end{proof}

\section{Nahm's equations from anti-Lie triple systems}


As observed in the previous section, the BHT-equation \eqref{Ter} have a natural interpretation as a double superbracket equation on the odd part of the Lie superalgebra $\gl_{n|m}(\cx)$. We shall now generalise this to arbitrary Lie superalgebras, or, equivalently to the anti-Lie triple systems \cite{FoF}.

An {\em anti-Lie triple system} (ALTS) is a vector space with a triple (trilinear) product $[\cdot, \cdot,\cdot]$ satisfying the following  identities
\begin{equation}
 \begin{matrix}[x,y,z]=[y,x,z]\\
 [x, y, z] + [z, x, y] + [y, z, x] = 0,\\
 [u, v, [x, y, z]] = [[u, v, x], y, z] + [x, [u, v, y], z] + [x, y, [u, v, z]].
 \end{matrix}
\label{ALTS}
\end{equation}

The third equation can be rewritten as a condition on left multiplications $L(\cdot,\cdot)$ (defined via $L(x,y)z=[x,y,z]$):
\begin{equation} \bigl[L(u,v),L(x,y)\bigr]=L(L(u,v)x,y)+L(x,L(u,v)y).\label{L}
\end{equation}

A basic example is the vector space of  $k \times k$-matrices with the triple product:
\begin{equation} [A,B,C]=ABC+BAC-CBA-BCA \label{AL}\end{equation}
This triple product leaves invariant the subspace 
$\Mat_{n,m}\oplus \Mat_{m,n}$ of off-diagonal blocks, and so the latter is also an ALTS.

We recall  \cite{FoF} the construction of a Lie superalgebra associated to an anti-Lie triple system $(V,[\cdot, \cdot,\cdot])$. 
\par
 Let $D(V)$ denote the Lie algebra of all left multiplications $L(x,y)$ on $(V,[\cdot,\cdot,\cdot])$.
Then $D(V)\oplus V$ becomes a Lie superalgebra $\fL(V)$ under the following bracket:
\begin{equation}
 \begin{matrix}\bigl[L(x,y),L(u,v)\bigr]=L(x,y)\circ L(u,v)- L(u,v)\circ L(x,y)\\
 \bigl[L(x,y),z]=L(x,y)z\\
 [x,y]=L(x,y). 
 \end{matrix}
\label{super}
\end{equation}
The even part of $\fL(V)$ is $\fL_0=D(V)$ and the odd one is $\fL_1=V$. Conversely, given a Lie superalgebra $\fL=\fL_0\oplus \fL_1$, the double superbracket defines an anti-Lie triple product on $\fL_1$:
\begin{equation}[x,y,z]=[[x,y],z].\label{double}
\end{equation}

\begin{example}Applying this construction to the ALTS $\Mat_{n,m}\oplus \Mat_{m,n}$  with the triple product given by \eqref{AL} produces the Lie superalgebra $\fG\fL_{n|m}(\cx)$.
\end{example}

Let now $(V,[\cdot, \cdot,\cdot])$ be a complex ALTS and $J$ a quaternionic automorphism, i.e. $J$ preserves the triple product, is antilinear, and satisfies $J^2=-1$.  We can extend $J$ to an antilinear automorphism of $\fL(V)$ by setting $J(L(x,y))=L(J(x),J(y))$. On $\fL_0$ it satisfies $J^2=1$, an so the Lie algebra $\fL_0$ has a symmetric pair decomposition $\fL_0=\fK\oplus \fM$, where $\fK$ is the $+1$-eigenspace and $\fM$ the $-1$-eigenspace of $J$. The antilinearity of $J$ implies that,  the following three functions
 \begin{equation*} T_1=-\frac{i}{2}[C,J(C)]\quad T_2=\frac{1}{2}[C,C]_\fK\quad T_3=-\frac{i}{2}[C,C]_\fM\end{equation*}
 take values in $\fK$.\\
 We consider the following ODE on $V$:
\begin{equation} \dot{C}=\frac{1}{2}[J(C),C,C],\label{JCCC}
\end{equation}
\begin{proposition} $C=C(t)$ is a solution of \eqref{JCCC} if and only if $T_1,T_2,T_3$ satisfy the Nahm equations.
\end{proposition}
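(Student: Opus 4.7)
The plan is to rewrite everything via the superbracket on $\fL(V)$ using \eqref{double} and then apply the super-Jacobi identity repeatedly. Unfolding \eqref{JCCC}, the equation reads $\dot C=\tfrac12[[J(C),C],C]$. Odd-odd superbrackets are symmetric, so $[J(C),C]=[C,J(C)]=2iT_1$ by definition of $T_1$, and the evolution collapses to
\[ \dot C = i[T_1,C]. \]
Applying the antilinear automorphism $J$, which fixes $T_1\in\fK$, gives $J(\dot C)=-i[T_1,J(C)]$. These two substitutions are the only non-routine rewriting; everything else is super-Jacobi.

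For the complex Nahm pair $\dot T_2=[T_3,T_1]$, $\dot T_3=[T_1,T_2]$, I would use $T_2+iT_3=\tfrac12[C,C]$ (which follows from the defining formulas together with $\fL_0=\fK\oplus i\fK$ and $J[C,C]=[J(C),J(C)]$). Odd-odd symmetry and a single super-Jacobi identity with one even and two equal odd arguments then give
\[ \dot T_2+i\dot T_3 = [\dot C,C] = i[[T_1,C],C] = \tfrac{i}{2}[T_1,[C,C]] = i[T_1,T_2+iT_3], \]
and separating real and imaginary parts yields the two equations.

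The third equation $\dot T_1=[T_2,T_3]$ is the computational heart. Differentiating $T_1=-\tfrac{i}{2}[C,J(C)]$ and substituting the expressions for $\dot C$ and $J(\dot C)$ gives
\[ \dot T_1 = \tfrac12\bigl([[T_1,C],J(C)]-[C,[T_1,J(C)]]\bigr); \]
super-Jacobi combined with $[T_1,[C,J(C)]]=[T_1,2iT_1]=0$ collapses the two terms into $\dot T_1=[[T_1,C],J(C)]$. Expanding $[T_1,C]$ by a further super-Jacobi application yields $[T_1,C]=-\tfrac{i}{4}[J(C),[C,C]]$, and one more reduction (odd-odd symmetry of $[\,\cdot\,,[J(C),[C,C]]]$ together with super-Jacobi) produces
\[ \dot T_1 = \tfrac{i}{8}\bigl[[C,C],[J(C),J(C)]\bigr]. \]
Since $T_2-iT_3=\tfrac12[J(C),J(C)]$, direct expansion of $[T_2,T_3]$ in the basis $\{[C,C],[J(C),J(C)]\}$ reproduces exactly the same expression, closing the ``only if'' direction. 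The converse is obtained by reading the same chain of identities backwards: the three Nahm equations translate into bracket identities that, combined with the definitions of $T_1,T_2,T_3$, recover $\dot C=i[T_1,C]$.

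The main obstacle is careful bookkeeping: the super-Jacobi identity carries parity-dependent signs, and with nested brackets involving up to three odd entries one must track at each step the parity of every intermediate element before invoking the appropriate symmetry or antisymmetry rule. No tool beyond super-Jacobi — equivalently, the derivation identity \eqref{L} — is required.
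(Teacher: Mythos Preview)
Your forward direction is correct and follows the same route as the paper: both arguments reduce everything to repeated applications of the super-Jacobi identity, in particular the special case $[x,[x,y]]=\tfrac12[[x,x],y]$ for odd $x$. The only difference is cosmetic---you first collapse \eqref{JCCC} to $\dot C=i[T_1,C]$ and work with $T_1$ directly, whereas the paper keeps $\dot C=\tfrac14[J(C),[C,C]]$ and computes with $\alpha=iT_1$, $\beta=T_2+iT_3$. The final expression $\dot T_1=\tfrac{i}{8}\bigl[[C,C],[J(C),J(C)]\bigr]$ is exactly the paper's $\dot\alpha=\tfrac18[[J(C),J(C)],[C,C]]$.

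Your treatment of the converse, however, does not work. ``Reading the chain backwards'' would require that the identity $[\dot C,C]=i[[T_1,C],C]$ (which is what the second and third Nahm equations give you, via $\dot T_2+i\dot T_3=[\dot C,C]$) implies $\dot C=i[T_1,C]$. It does not: it only says $[\dot C-i[T_1,C],C]=0$, and an analogous use of $\dot T_1=[T_2,T_3]$ only constrains $[\dot C-i[T_1,C],J(C)]$. Any motion of $C$ in directions annihilated by both $[\,\cdot\,,C]$ and $[\,\cdot\,,J(C)]$ produces $T_i\equiv 0$ (so Nahm holds trivially) without forcing $\dot C=0$. The paper's own proof in fact establishes only the forward implication; the ``if'' direction as stated is not proved there either, so you should not claim to have supplied it.
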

\begin{proof}
The definition implies that $T_2=\frac{1}{4}([C,C]+J[C,C])$ and $T_3=-\frac{i}{4}([C,C]-J[C,C])$. Setting $\alpha=iT_1$ and $\beta=T_2+iT_3$, we have
$$ \alpha=\frac{1}{2}[C,J(C)],\quad \beta=\frac{1}{2}[C,C].$$
The super-Jacobi identity implies that if $x$ is an odd element of a Lie superalgebra $\g$, then for any $y\in \g$
\begin{equation}
 [x,[x,y]]=\frac{1}{2}[[x,x],y].\label{same}
\end{equation}
 In particular, the equation \eqref{JCCC} can be rewritten as 
$$ \dot{C}=\frac{1}{4}[[J(C),[C,C]].$$
We compute using \eqref{same}:
\begin{multline*}
 \dot\alpha=\frac{1}{2}[\dot{C},J(C]]+\frac{1}{2}[C,J(\dot{C})]=\frac{1}{8}[[[J(C),[C,C]],J(C)]-\frac{1}{8}[C,[C,[J(C),J(C)]]\\
=\frac{1}{8}[[J(C),J(C)],[C,C]]=\frac{1}{2}[J(\beta),\beta]=\frac{1}{2}[T_2-iT_3,T_2+iT_3]=i[T_2,T_3].\cr
\end{multline*}
Similarly:
$$
 \dot\beta=[\dot C,C]=\frac{1}{2}[[[J(C),C],C]],C]=\frac{1}{4}[[J(C),C],[C,C]]=[\alpha,\beta],
$$
which is equivalent to the remaining two Nahm equations.
\end{proof}

\begin{remark} We can also consider an arbitrary, real or complex, anti-Lie triple system $(V, [\cdot,\cdot,\cdot])$ equipped with an automorphism $J$ such that $J^2=-1$. The even part of the Lie algebra $\fL_0$ still has the symmetric decomposition $\fL_0=\fK\oplus\fM$ into the $\pm$-eigenspaces of $J$ extended to $\fL_0$. We can define the three functions: 
$$R_1=\frac{1}{2}[C,J(C)]\in \fM,\quad R_2=\frac{1}{2}[C,C]_\fK\in \fK,\quad R_3=\frac{1}{2}[C,C]_\fM\in \fM.$$
Equation \ref{JCCC} implies that $R_1,R_2,R_3$ satisfy the {\em Nahm-Schmid} equations \cite{Schmid}:
$$ \dot{R}_1=\frac{1}{2}[R_2,R_3],\quad \dot{R}_2=\frac{1}{2}[R_1,R_3],\quad \dot{R}_3=\frac{1}{2}[R_1,R_2].
$$ 
\end{remark}

\section{Flows on Jacobians}

It is well-known \cite{Hit1,Hit2} that Nahm's equations correspond to a linear flow on the Jacobian\footnote{This is true if the curve is smooth or integral; in general, the flow is on the generalised Jacobian or on the moduli space of higher rank vector bundles.} of an algebraic curve embedded in $T\oP^1$, i.e. in the total space $|\sO(2)|$ of the line bundle $\sO_{\oP^1}(2)$. Similarly, as we shall shortly see, the Basu-Harvey-Terashima equations \eqref{3rd} correspond to a linear flow on the {\em equivariant} Jacobian of a curve in $\oP^2\backslash \oP^1$, i.e. in the total space of the line bundle $\sO_{\oP^1}(1)$.
\par
In this section we aim to make precise the correspondence between the Nahm flow and the Basu-Harvey-Terashima flow on the Jacobians. We consider first the purely holomorphic picture in the spirit of Beauville \cite{Beau1}. Thus the Nahm matrices are replaced by a quadratic matrix polynomial $X(\zeta)=X_0+X_1\zeta+X_2\zeta^2$ with $X_i\in \gl_n(\cx)$. Such a polynomial corresponds to an acyclic $1$-dimensional sheaf $\sF$ on $T=|\sO(2)|$ defined via
\begin{equation}
 0\to \sO_T(-3)^{\oplus n}\stackrel{\eta-X(\zeta)}{\longrightarrow}\sO_T(-1)^{\oplus n}\longrightarrow \sF\to 0. \label{res1}
\end{equation}
The support of $\sF$ is the $1$-dimensional scheme $S$ cut out by $\det(\eta-X(\zeta))$. As long as this polynomial is irreducible, $S$ is integral and $\sF$ is a line bundle on $S$. More generally, $\sF$ is a line bundle (i.e. an invertible sheaf) on $S$ as long as $X(\zeta)$ is a regular element of $\gl_n(\cx)$ for each $\zeta\in \oP^1$ (with $X(\infty)=X_2$). In fact, we have the following result of Beauville:
\begin{theorem}[Beauville \cite{Beau1}] Let $d$ be a positive integer and  $P(\zeta,\lambda)=\lambda^k+a_1(\zeta)\lambda^{k-1}+\dots+a_k(\zeta)$ a polynomial with $\deg a_i(\zeta)=id$, $i=1,\dots,k$. Consider the variety
$$M(P)=\{X(\zeta)\in \gl_n(\cx)[\zeta]\;;\; \deg X(\zeta)=d,\enskip\det(\lambda-X(\zeta))=P(\zeta,\lambda)\},$$
and its subvariety $M(P)^{\rm reg}$ consisting of $X(\zeta)$ which are regular for each $\zeta\in \oP^1$.
\par
 The following exact sequence on $T=|\sO(d)|$
\begin{equation}
 0\to \sO_T(-d-1)^{\oplus k}\stackrel{\lambda-X(\zeta)}{\longrightarrow}\sO_T(-1)^{\oplus k}\longrightarrow \sF\to 0 \label{resd}
\end{equation}
 induces a $1-1$ correspondence between $M(P)^{\rm reg}/GL_k(\cx)$ and $\Jac^{g-1}(S)-\Theta$, where $S\subset |\sO(d)|$ is the curve of (arithmetic) genus $g=(k-1)(dk-2)/2$ defined by the equation $P(\zeta,\lambda)=0$.\label{Be}
\end{theorem}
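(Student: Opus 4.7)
The plan is to construct mutually inverse maps $\Phi: M(P)^{\rm reg}/GL_k(\cx) \to \Jac^{g-1}(S)\setminus\Theta$ and $\Psi$ in the reverse direction. The forward map $\Phi$ is dictated by the exact sequence \eqref{resd}; the inverse $\Psi$ is obtained by pushing a line bundle on $S$ forward to $\oP^1$ along the natural projection $\pi:S\to\oP^1$ and reading off a matrix polynomial from the endomorphism ``multiplication by $\lambda$''.

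For the forward direction, given $X(\zeta)\in M(P)^{\rm reg}$, the map $\lambda-X(\zeta)$ in \eqref{resd} is injective because its determinant is the nonzero section $P(\zeta,\lambda)$. The cokernel $\sF$ is supported on $S$, and the pointwise-regularity of $X(\zeta)$ forces $\lambda-X(\zeta)$ to drop rank by exactly one along $S$, so $\sF$ is an invertible sheaf on $S$. Conjugation by $GL_k(\cx)$ amounts to a simultaneous change of basis of source and target in \eqref{resd}, leaving $\sF$ unchanged, so $\Phi$ descends to the quotient. To see $\sF\in\Jac^{g-1}(S)\setminus\Theta$, I would push \eqref{resd} down to $\oP^1$ via $\pi:T\to\oP^1$, using $\pi_*\sO_T=\bigoplus_{i\geq 0}\sO_{\oP^1}(-id)$. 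Computing Euler characteristics on $\oP^1$ from the resulting resolution gives $\chi(\sF)=0$, hence $\deg\sF=g-1$ by Riemann--Roch on $S$. The vanishing $h^0(\sF)=0$ follows from the long exact cohomology sequence together with $H^0(T,\sO_T(-1)^{\oplus k})=0$, which is immediate from $\pi_*\sO_T(-1)=\bigoplus_{i\geq 0}\sO_{\oP^1}(-1-id)$.

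For the inverse, start with a line bundle $L$ on $S$ of degree $g-1$ with $h^0(L)=0$. Since $\pi$ is finite, $V:=\pi_*L$ is a rank-$k$ vector bundle on $\oP^1$ satisfying $H^i(\oP^1,V)=H^i(S,L)$. Riemann--Roch gives $\chi(V)=0$, and combined with $h^0(V)=0$ this forces $h^1(V)=0$; the only rank-$k$ bundle on $\oP^1$ with vanishing $h^0$ and $h^1$ is $\sO_{\oP^1}(-1)^{\oplus k}$. Multiplication by the tautological section $\lambda\in H^0(T,\pi^*\sO(d))$ gives an $\sO_S$-linear map $L\to L\otimes\pi^*\sO(d)$, whose pushforward is a twisted endomorphism $V\to V(d)$, i.e.\ a matrix polynomial $X(\zeta)$ of degree $d$, canonically defined up to the $GL_k(\cx)$-action on $V\cong\sO_{\oP^1}(-1)^{\oplus k}$. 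Its characteristic polynomial is $P(\zeta,\lambda)$ by construction, and regularity of $X(\zeta)$ at each $\zeta$ follows from the fact that the scheme-theoretic fibre $L|_{\pi^{-1}(\zeta)}$ is a cyclic $\cx[\lambda]/P(\zeta,\lambda)$-module.

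The remaining step is to verify $\Phi\circ\Psi=\mathrm{id}$ and $\Psi\circ\Phi=\mathrm{id}$, and this is where I expect the main obstacle. One must check that the graded pushforward of \eqref{resd} reassembles to the resolution $\sO_{\oP^1}(-1-d)^{\oplus k}\xrightarrow{\lambda-X(\zeta)}\sO_{\oP^1}(-1)^{\oplus k}\to\pi_*\sF\to 0$, so that multiplication by $\lambda$ on $\pi_*\sF$ is recovered as the original matrix $X(\zeta)$ in the canonical basis. The cleanest way is to identify the fibre of $V$ over $\zeta$ with the direct sum of cokernels of $\lambda-X(\zeta)$ at the points of $\pi^{-1}(\zeta)$ and to check that multiplication by $\lambda$ acts in the eigenbasis determined by $L$; this pins down $X(\zeta)$ up to $GL_k(\cx)$-conjugation and makes the two constructions mutually inverse.
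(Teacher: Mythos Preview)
The paper does not give its own proof of this statement: it is quoted as a result of Beauville \cite{Beau1} and stated without argument. There is therefore nothing in the paper to compare your proposal against.

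That said, your outline is the standard proof one finds in Beauville's paper, and the strategy is correct. A couple of places deserve slightly more care. First, when you argue that regularity of $X(\zeta)$ forces $\sF$ to be invertible on $S$, you should phrase this scheme-theoretically: over each closed point $\zeta_0\in\oP^1$ the cokernel of $\lambda-X(\zeta_0)$ is the $\cx[\lambda]/P(\zeta_0,\lambda)$-module $\cx^k$, and regularity is exactly the statement that this module is cyclic, hence free of rank one over $\sO_{S,\zeta_0}$; the pointwise ``rank drops by one'' picture is only adequate when $S$ is smooth. Second, your computation of $h^0(\sF)=0$ via the long exact sequence on $T$ is delicate because $T$ is non-compact and $\pi_\ast\sO_T(-1)$ is not coherent; it is cleaner to push the whole resolution forward to $\oP^1$ first (using that $\pi|_S$ is finite) and argue there, which is in effect what you do in the inverse construction. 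With those adjustments your two maps are indeed mutually inverse, and the verification $\Psi\circ\Phi=\mathrm{id}$ you flag as the ``main obstacle'' is straightforward once you observe that $\pi_\ast$ applied to \eqref{resd} recovers, in each graded piece, the map $\sO_{\oP^1}(-d-1)^{\oplus k}\to\sO_{\oP^1}(-1)^{\oplus k}$ given by $\lambda-X(\zeta)$ itself.
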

We shall call this correspondence the {\em Beauville isomorphism}.

\subsection{$\tau$-sheaves}
We now replace $W_{n,m}$ by its complexification, i.e. the vector space $R_{n,m}$ of quadruples of complex matrices $(A_0,A_1,B_0,B_1)$ with $A_0,A_1$ of size $n\times m$, $B_0,B_1$ of size $m\times n$.  $R_{n,m}$  a {\em biquaternionic} vector space, i.e. a module over $\Mat_{2,2}(\cx)$, and comes equipped with a $2$-sphere of complex symplectic structures:
\begin{equation}
 \omega:\tr d(A_0+A_1\zeta)\wedge d(B_0+B_1\zeta),
\end{equation}
where $\zeta$ denotes the affine coordinate on $\oP^1$.  We can view  $\omega$ itself  as an $\sO(2)$-twisted symplectic form. It is clearly $GL(m,\cx)\times GL(n,\cx)$-invariant. The (twisted) moment map for the $GL(n)$-action is given by:
\begin{equation} \mu:(A_0,A_1,B_0,B_1)\longmapsto A_0B_0 + (A_0B_1+A_1B_0)\zeta + A_1B_1\zeta^2,\label{nn}
\end{equation}
while the one for the $GL(m)$-action is:
\begin{equation} \nu:(A_0,A_1,B_0,B_1)\longmapsto-B_0A_0-(B_0A_1+B_1A_0)\zeta-B_1A_1\zeta^2.\label{mm}
\end{equation}
These are complexifications of the moment maps defined in \S\ref{1}. As in section \ref{Lax} we can view $R_{n,m}$ as the following subset of $\gl(m+n)\otimes \cx^2$:
\begin{equation}
 C_0=\begin{pmatrix} 0 & A_0\\ B_0 & 0\end{pmatrix},\enskip C_1=\begin{pmatrix} 0 & A_1\\ B_1 & 0\end{pmatrix}.\label{C}
\end{equation}
For any pair $C_0,C_1$ of quadratic matrices, say of size $k\times k$, we can define an acyclic $1$-dimensional sheaf on $\hat{T}=|\sO(1)|$ via the exact sequence \eqref{resd} with $d=1$ and $X(\zeta)=C_0+C_1\zeta$. We are interested in the structure of these sheaves and their supports for $C_0,C_1$ of the form \eqref{C}, and in their relation to sheaves on $|\sO(2)|$ defined via maps \eqref{nn} and \eqref{mm}. Observe that $|\sO(2)|$ is the quotient of $|\sO(1)|$ by the following involution on $|\sO(1)|$:
\begin{equation}
 \tau(\zeta,\lambda)=(\zeta,-\lambda)\label{tau}.
\end{equation}
For an element $(A_0,A_1,B_0,B_1)$ of $R_{n,m}$ with $n\geq m$, the polynomial $\det(\lambda-C_0-C_1\zeta))$ is $\tau$-invariant and of the form
\begin{equation}
 P(\zeta,\lambda)=\lambda^{n-m}\bigl(\lambda^{2m}+a_1(\zeta)\lambda^{2m-2}+\dots+a_{m-1}(\zeta)\lambda^2+a_m(\zeta)\bigr),\quad \deg a_i(\zeta)=2i.  
\label{square}\end{equation}
$GL_n(\cx)\times GL_m(\cx)$-orbits of elements of $R_{n,m}$ correspond  to acyclic {\em $\tau$-sheaves} on $\hat{S}=\{(\zeta,\lambda); P(\zeta,\lambda)=0\}$,
i.e. sheaves equivariant with respect to the action of $\tau$. In the case of a line (or vector) bundle $\sF$ on $\hat{S}$ this means that $\tau$ lifts to an involutive bundle map on the total space of $\sF$.
\par
Let us write $R_{n,m}(P)$ for $R_{n,m}\cap M(P)$ and $R_{n,m}(P)^{\rm reg}$ for $R_{n,m}\cap M(P)^{\rm reg}$. We have
\begin{proposition}
 The Beauville isomorphism induces  a $1-1$ correspondence between $R_{n,m}(P)^{\rm reg}/GL_n(\cx)\times GL_m(\cx)$ and the isomorphism classes of acyclic $\tau$-line bundles on $\hat{S}$.
\end{proposition}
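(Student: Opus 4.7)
The plan is to reduce the statement to Theorem~\ref{Be} by translating the $\tau$-equivariant structure on an acyclic line bundle $\sF$ on $\hat{S}$ into the existence of an involution $\alpha\in GL_{n+m}(\cx)$ that anti-commutes with $X(\zeta)=C_0+C_1\zeta$ on the ambient vector space of the Beauville resolution. The block anti-diagonal form \eqref{C} of the matrices in $R_{n,m}$ is precisely what arises from choosing a basis diagonalising such an $\alpha$, and conjugation by the elements of $GL_{n+m}(\cx)$ commuting with $\alpha$ is exactly the $GL_n(\cx)\times GL_m(\cx)$-action.

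For the forward map, I would set $\sigma=\diag(I_n,-I_m)$. Since each $C_i$ is off-diagonal, $\sigma C_i+C_i\sigma=0$, so the pair $(\sigma,-\sigma)$ of vertical arrows intertwines the Beauville resolution \eqref{resd} of $\sF$ with the analogous resolution of $\tau^*\sF$ (in which the first map becomes $-\lambda-X(\zeta)$). This descends to an isomorphism $\Phi:\sF\to\tau^*\sF$, and iterating the intertwiners yields $\tau^*\Phi\circ\Phi=\mathrm{id}$ since $(-\sigma)(-\sigma)=I$; thus $\Phi$ is an involutive lift of $\tau$. The construction is manifestly invariant under the $GL_n(\cx)\times GL_m(\cx)$-action on $(A_0,A_1,B_0,B_1)$, which corresponds to conjugation of $X(\zeta)$ by the block-diagonal subgroup commuting with $\sigma$.

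For the inverse map, any $\tau$-equivariant structure $\Phi:\sF\to\tau^*\sF$ on an acyclic line bundle lifts to an intertwining pair $(\alpha,\beta)\in GL_{n+m}(\cx)^2$ between the Beauville resolutions. Matching the $\lambda$-coefficients in the resulting identity $\beta(\lambda-X(\zeta))=(-\lambda-X(\zeta))\alpha$ forces $\beta=-\alpha$ and $\alpha X(\zeta)+X(\zeta)\alpha=0$. The involutivity of $\Phi$ makes $\alpha^2$ act as a scalar on $\sF$, and since $\alpha$ is determined by $\Phi$ only up to a scalar, it can be normalised so that $\alpha^2=I$. Diagonalising $\alpha$ then brings $X(\zeta)$ into block anti-diagonal form.

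The main obstacle will be verifying that the eigenspace dimensions $d_\pm=\dim V_\pm$ of $\alpha$ satisfy $\{d_+,d_-\}=\{n,m\}$. This I would extract from the prescribed form \eqref{square} of $P(\zeta,\lambda)$ via a Schur complement computation: for the block anti-diagonal $X(\zeta)$ one has $\det(\lambda-X(\zeta))=\lambda^{|d_+-d_-|}\det(\lambda^2 I-B(\zeta)A(\zeta))$, where $A(\zeta),B(\zeta)$ are the off-diagonal blocks. Comparing with $P(\zeta,\lambda)=\lambda^{n-m}Q(\zeta,\lambda^2)$, whose even factor has $\lambda$-degree exactly $2m$, forces $|d_+-d_-|=n-m$ and hence, with $d_++d_-=n+m$, one gets $\{d_+,d_-\}=\{n,m\}$. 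Under the convention $n\geq m$, fixing $d_+=n$, $d_-=m$ (the opposite choice corresponds to replacing $\alpha$ by $-\alpha$, i.e., to the alternative $\tau$-structure on the same $\sF$) singles out a quadruple $(A_0,A_1,B_0,B_1)\in R_{n,m}(P)^{\rm reg}$, well-defined up to the stabiliser of $\alpha$ in $GL_{n+m}(\cx)$, which is the block-diagonal subgroup $GL_n(\cx)\times GL_m(\cx)$. The two constructions are mutually inverse by inspection, establishing the bijection.
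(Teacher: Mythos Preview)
Your proposal is correct and follows essentially the same strategy as the paper: in both directions the key is that the block anti-diagonal form \eqref{C} is equivalent to anti-commutation of $X(\zeta)$ with the involution $g_0=\diag(I_n,-I_m)$, and this involution is precisely what encodes the $\tau$-equivariant structure on the Beauville sheaf $\sF$. The paper writes the forward map with both vertical arrows equal to $g_0$ (mapping $\lambda-C(\zeta)$ to $\lambda+C(\zeta)$) rather than your $(\sigma,-\sigma)$ pair, but this is only a sign convention on the identification of $\tau^\ast\sF$ with the cokernel of $\pm(\lambda+C(\zeta))$.

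Your treatment is in fact more careful than the paper's on one point. In the inverse direction the paper simply asserts that one may choose a basis of $H^0(\hat S,\sF(1))$ in which the induced involution is $\diag(I_n,-I_m)$, without explaining why the eigenspace dimensions are $n$ and $m$ rather than some other pair summing to $n+m$. Your Schur-complement computation $\det(\lambda-X(\zeta))=\lambda^{|d_+-d_-|}\det(\lambda^2 I-B(\zeta)A(\zeta))$, together with the prescribed form \eqref{square} of $P$, supplies exactly this missing step (under the mild assumption, implicit throughout, that $a_m(\zeta)\not\equiv 0$ so that $\lambda^{n-m}$ is the exact power of $\lambda$ dividing $P$).
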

\begin{proof}
 Let  $(A_0,A_1,B_0,B_1)$ with the corresponding $C(\zeta)=C_0+C_1\zeta$ given by \eqref{C} belong to $R_{n,m}(P)^{\rm reg} $. Then $g_0C(\zeta)g_0^{-1}=-C(\zeta)$, where $g_0=\begin{pmatrix} {\rm Id}_n & 0\\ 0 & -{\rm Id}_m\end{pmatrix}$. The commutative diagram on $|\sO(1)|$ (with $k=n+m$)
$$\begin{CD}
0 @>>> \sO(-2)^{\oplus k} @>{\lambda-C(\zeta)}>>\sO(-1)^{\oplus k} @>>> \sF@>>> 0\\
@. @Vg_0VV @Vg_0VV @V\tilde{\tau}VV @.\\
0 @>>> \sO(-2)^{\oplus k} @>{\lambda+C(\zeta)}>>\sO(-1)^{\oplus k} @>>> \sF@>>> 0
\end{CD}
$$
defines a lift $\tilde{\tau}:\sF\to \sF$ of $\tau$. Conjugating $ C(\zeta)$ by an element $T$ of $GL_n(\cx)\times GL_m(\cx)$ commutes with $g_0$ and so
$C(\zeta)$ and $TC(\zeta)T^{-1}$ induce isomorphic $\tau$-sheaves. Conversely, suppose that we are given a lift $\tilde{\tau}$ of $\tau$ on an acyclic line bundle $\sF$, satisfying $\tilde{\tau}^2=1$. We obtain the corresponding involution $\tilde{\tau}$ on $\cx^{n+m}=H^0(\hat{S},\sF(1))$. We can choose a basis of $H^0(\hat{S},\sF(1))$ so that $\tilde{\tau}$ is represented by the matrix $g_0=\begin{pmatrix} {\rm Id}_n & 0\\ 0 & -{\rm Id}_m\end{pmatrix}$. It follows from the above commutative diagram that  $g_0C(\zeta)g_0^{-1}=-C(\zeta)$, so that $(A_0,A_1,B_0,B_1)$ belongs to $R_{n,m}$.
\end{proof}

\subsection{The case $n=m$} In this case the quotient of the curve $\hat{S}$ by the involution $\tau$ is (as a scheme) a curve $S$ in $T=|\sO(2)|$.  The maps \eqref{nn} and \eqref{mm} induce, via the above Proposition and Theorem \ref{Be}, correspondences between acyclic  $\tau$-line bundles on $\hat{S}$ and acyclic line bundles on $S$. We wish to understand these correspondences.
\par
We shall write $A(\zeta)$ for $A_0+A_1\zeta$ and $B(\zeta)$ for $B_0+B_1\zeta$, so that the map $\mu$ gives the quadratic matrix polynomial $A(\zeta)B(\zeta)$ and $\nu$ the polynomial $B(\zeta)A(\zeta)$. Let us write $\hat{P}(\zeta,\lambda)$ for the polynomial $\det(\lambda-C(\zeta))$ and $P(\zeta,\eta)$ for the polynomial $\det(\eta-A(\zeta)B(\zeta))=\det(\eta-B(\zeta)A(\zeta))$. Denote by $\hat{S}$ the curve in $\oP^2$ cut out by $\hat{P}$ and by $S$ the curve cut out by $P$ in $T\oP^1$. We have
$$ \hat{P}(\zeta,\lambda)=P(\zeta,\lambda^2),$$
so that $\hat{S}$ is a double cover of $S$ ramified over $\eta=0$. The genus of $S$ is equal to $(n-1)^2$ and the genus of $\hat S$ is equal to $(n-1)(2n-1)$.
\par
We shall denote by $\sL$ the acyclic $\tau$-sheaf on $\hat{S}$ defined by $C(\zeta)$ and by $\sF$ (resp. $\sG$) the acyclic sheaf on $S$ defined by $A(\zeta)B(\zeta)$ (resp. $B(\zeta)A(\zeta)$).  We shall assume that the zeros of $\det A(\zeta)$ are distinct from the zeros of $\det B(\zeta)$, and we shall write $\Delta_A$ (resp. $\Delta_B$) for the divisor $\det A(\zeta)=0$, $\lambda=0$ (resp. $\det B(\zeta)=0$, $\lambda=0$) on $\hat{S}$. Thus $\Delta_A+\Delta_B$ is the ramification divisor of the projection $\tau$.

\begin{proposition} With the above assumptions $\sL\simeq \pi^\ast \sF\otimes [\Delta_B]\simeq \pi^\ast \sG\otimes [\Delta_A]$, where $\pi:\hat S\to S$ is the projection.
\end{proposition}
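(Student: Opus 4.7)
My plan is to construct, within Beauville's framework, an explicit morphism $\phi\colon\pi^\ast\sF\to \sL$ that is an isomorphism on $\hat S\setminus \Delta_B$ and vanishes to first order along $\Delta_B$. This will identify $\pi^\ast\sF$ with $\sL\otimes [\Delta_B]^{-1}$, giving the first asserted isomorphism. The parallel statement $\sL\cong \pi^\ast\sG\otimes [\Delta_A]$ will follow symmetrically from the analogous map $\psi\colon \pi^\ast\sG\to \sL$, $v_2\mapsto (0,v_2)$, with the roles of the two blocks exchanged.

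Since $\pi\colon \hat T\to T$, $(\zeta,\lambda)\mapsto (\zeta,\lambda^2)$, is finite and flat with $\pi^\ast\sO_T(k)=\sO_{\hat T}(k)$, pulling back the Beauville resolution of $\sF$ yields
$$0\to \sO_{\hat T}(-3)^n\xrightarrow{\lambda^2-AB}\sO_{\hat T}(-1)^n\to \pi^\ast\sF\to 0.$$
I would define $\phi$ on resolutions by the inclusion $v_1\mapsto (v_1,0)$ into $\sO_{\hat T}(-1)^{2n}$ followed by the quotient onto $\sL$. The identity
$$(\lambda-C)\begin{pmatrix}\lambda w\\ Bw\end{pmatrix}=\begin{pmatrix}(\lambda^2-AB)w\\ 0\end{pmatrix}$$
places $((\lambda^2-AB)w,0)$ in $\im(\lambda-C)$, so $\phi$ descends to a well-defined sheaf map $\pi^\ast\sF\to \sL$ on $\hat S$.

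Next I would analyse $\phi$ fibrewise on $\hat S$. At $p$ with $\lambda(p)\neq 0$, the relation $(\lambda-C)(0,v_2/\lambda)=(-Av_2/\lambda,v_2)$ shows every element of $\sL_p$ has a representative of the form $(v_1,0)$, so $\phi_p$ is a surjection between one-dimensional fibres, hence an isomorphism. At $p\in \Delta_A$ (where $\lambda(p)=0$, $\det A(\zeta(p))=0$, and $\det B(\zeta(p))\neq 0$ by the disjointness hypothesis), a direct calculation gives $\im C(\zeta(p))=\im A(\zeta(p))\oplus\cx^n$, and both $\sL_p$ and $(\pi^\ast\sF)_p$ are canonically $\cx^n/\im A(\zeta(p))$ with $\phi_p$ the identity. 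Finally, at $p\in \Delta_B$, invertibility of $A(\zeta(p))$ yields $(v_1,0)=C(\zeta(p))\bigl(0,A(\zeta(p))^{-1}v_1\bigr)$, forcing $\phi_p=0$. Thus $\Coker\phi$ is a torsion sheaf supported precisely on $\Delta_B$, and, $\pi^\ast\sF$ and $\sL$ both being line bundles on the integral curve $\hat S$ under Beauville's regularity hypothesis, the nonzero $\phi$ is injective.

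A degree count then finishes the argument. The arithmetic genera are $g(\hat S)=(2n-1)(n-1)$ and $g(S)=(n-1)^2$, and $\pi$ has degree $2$, so
$$\deg \sL-\deg \pi^\ast\sF=\bigl((2n-1)(n-1)-1\bigr)-2\bigl((n-1)^2-1\bigr)=n=\deg \Delta_B.$$
Combined with injectivity of $\phi$ and support of $\Coker\phi$ on $\Delta_B$, this forces $\im\phi=\sL(-\Delta_B)$, i.e.\ $\pi^\ast\sF\cong \sL\otimes [\Delta_B]^{-1}$. The main technical hurdle I anticipate arises when $\hat S$ is not integral: then the global degree count does not by itself pin down the length of $\Coker\phi$, and one must supplement it with a local normal-form analysis of $B(\zeta)$ at each simple zero of $\det B$ to verify directly that $\phi$ vanishes to order exactly one there, so that $\Coker\phi$ is a skyscraper of total length $n$.
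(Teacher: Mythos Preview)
Your argument is correct and rests on the same idea as the paper's: the first-block inclusion $v_1\mapsto (v_1,0)$ from $\cx^n$ into $\cx^{2n}$, together with the observation that this comparison degenerates precisely along $\Delta_B$. The paper implements this by picking a generic $u\in\cx^n$, regarding it as a section $s_u$ of $\sF(1)$ and $(u,0)$ as a section $\hat s_u$ of $\sL(1)$, and comparing the two divisors directly; you package the same computation as a sheaf morphism $\phi:\pi^\ast\sF\to\sL$ and read off the discrepancy from $\Coker\phi$, with the degree count $\deg\sL-\deg\pi^\ast\sF=n=\deg\Delta_B$ making the multiplicity bookkeeping in the integral case slightly cleaner than the paper's treatment.
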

\begin{proof} We consider the sheaves $\sL(1)$, $\sF(1)$ and $\sG(1)$ which are cokernels of $\lambda-C(\zeta): \sO(-1)^{\oplus 2n}\to \sO^{\oplus 2n}$,
$\eta-A(\zeta)B(\zeta): \sO(-2)^{\oplus n}\to \sO^{\oplus n}$, and of $\eta-B(\zeta)A(\zeta): \sO(-2)^{\oplus n}\to \sO^{\oplus n}$, respectively.
Any vector $u\in \cx^n$ defines  a global section $s_u$ of $\sF(1)$ via \eqref{res1}. We choose $u$ so that the zeros of $s_u$  are disjoint from $\eta=0$ and from the singular locus of $S$. In other words $u\not\in \im A(\zeta)B(\zeta)$ if $\det A(\zeta)B(\zeta)=0$ and $u\not\in \im (\eta-A(\zeta)B(\zeta))$ for a singular point $(\zeta,\eta)\in S$.  Consider the vector $(u,0)\in \cx^{2n}$ which defines a global section $\hat{s}_u$ of $\sL(1)$. It is then easy to check that $(u,0)\in \im (\lambda-C(\zeta))$ if either $\lambda\neq 0$ and $u\in \im(\lambda^2-A(\zeta)B(\zeta))$ or $\lambda=0$, $\det B(\zeta)=0$ and $u\in\im A(\zeta)$. Since $\Delta_A$ and $\Delta_B$ are assumed to be disjoint, the condition $u\in\im A(\zeta)$ follows from $\lambda=0$ and $\det B(\zeta)=0$. Thus the divisor $(\hat{s}_u)$ of $\hat{s}_u$ is $\pi^{-1}(s_u)+\Delta_B$ and the first isomorphism follows. The proof of $\sL\simeq \pi^\ast \sG\otimes [\Delta_A]$ is completely analogous.
\end{proof}

\subsection{Flows} The Beauville correspondence implies that the flow of matrices satisfying the Nahm equations corresponds to a flow on $J^{g-1}(S)-\Theta$. It is well-known \cite{Hit1, Hit2} that this latter flow is the linear flow $\sF\mapsto \sF\otimes L^t$, where $L$ is the line bundle with transition function $\exp(\eta/\zeta)$. Similarly the BHT-flow corresponds to a linear flow on the moduli space of acyclic $\tau$-line bundles on $\hat S$ 
in the direction of the line bundle with transition function $\exp(\lambda^2/\zeta)$. In addition, in order to obtain the Basu-Harvey-Terashima equations, rather than purely holomorphic equations \eqref{2nd}, one needs to restrict the flow further to $\sigma$-line bundles on $\hat S$, i.e. line bundles equipped with a lift of the quaternionic structure of $|\sO_{\oP^1}(1)|$.

\end{document}